\algrenewcommand\textproc{}
\newcolumntype{M}[1]{>{\centering\arraybackslash}m{#1}}
\theoremstyle{plain}
\newtheorem{lemma}{\textbf{Lemma}}
\newtheorem{theorem}{\textbf{Theorem}}
\newtheorem{corollary}{\textbf{Corollary}}
\newtheorem{problem}{\textbf{Problem}}
\theoremstyle{definition}
\newtheorem{definition}{\textbf{Definition}}
\newtheorem{remark}{Remark}
\newtheorem{example}{\textbf{Example}}
\useunder{\uline}{\ul}{}
\newcommand{\mC}{\mathbb{C}}
\newcommand{\F}{\mathcal{F}}
\newcommand{\Lc}{\mathcal{L}}
\newcommand{\mZ}{\mathbb{Z}}
\newcommand{\Ac}{\mathcal{A}}
\DeclareMathOperator*{\Imag}{Im}
\DeclareMathOperator*{\Real}{Re}
\newcommand{\gc}{\cellcolor[gray]{.90}} 
\renewcommand{\Function}[2]{%
\csname ALG@cmd@\ALG@L @Function\endcsname{#1}{#2}%
\def\jayden@currentfunction{#1}%
}
\newcommand{\funclabel}[1]{%
\@bsphack \protected@write\@auxout{}{%
\string\newlabel{#1}{{\jayden@currentfunction}{\thepage}}%
}%
\@esphack }
\def\BibTeX{{\rm B\kern-.05em{\sc i\kern-.025em b}\kern-.08em T\kern-.1667em\lower.7ex\hbox{E}\kern-.125emX}}
\begin{document}
  \title{Time-to-reach Bounds for Verification of Dynamical Systems Using the
  Koopman Spectrum}
  \author{Jianqiang Ding and Shankar A. Deka, \IEEEmembership{Member, IEEE}
  \thanks{ Department of Electrical Engineering, Aalto University, Finland.
  Email: {\tt\small \{jianqiang.ding, shankar.deka\}}@aalto.fi.}
  }

  \maketitle

  \begin{abstract}


In this work, we present a novel Koopman spectrum-based reachability verification method for nonlinear systems. Contrary to conventional methods that focus on characterizing all potential states of a dynamical system over a presupposed time span, our approach seeks to verify the reachability by assessing the non-emptiness of estimated time-to-reach intervals without engaging in the explicit computation of reachable set.
Based on the spectral analysis of the Koopman operator, we reformulate the problem of verifying existence of reachable trajectories into the problem of determining feasible time-to-reach bounds required for system reachability. By solving linear programming (LP) problems, our algorithm can effectively estimate all potential time intervals during which a dynamical system can enter (and exit) target sets from given initial sets over an unbounded time horizon.
Finally, we benchmark our method against existing techniques on challenging problems,
such as verifying the reachability between non-convex or even disconnected sets, as well as backward reachability and multiple entries into target sets.
Additionally, we validate its applicability in addressing real-world challenges and scalability to high-dimensional systems through case studies in verifying the reachability of the cart-pole and multi-agent consensus systems.

\end{abstract}

  \begin{IEEEkeywords}
    Koopman operator, reachability analysis, verification, nonlinear dynamical
    systems.
  \end{IEEEkeywords}

\section{Introduction}



As modern society increasingly relies on complex systems in safety-critical areas such as robotics, autonomous driving, and power grids, providing rigorous guarantees to ensure their reliable operation has become crucial. 
Verification of reachability for dynamical systems has been extensively studied within the control community in recent decades. For instance, \cite{asarin2003reachability,asarin2006recent} take an abstraction-based approach wherein reachability analysis is performed on an abstract representation of the original system.
In a different vein, Hamilton-Jacobi (HJ) approaches \cite{bansal2017hamilton}\cite{lygeros2004reachability}
frame the problem as a two-player zero-sum game between the controller and an adversary,
leading to backward reachable set in the form of zero-sublevel set of the solution of a Hamilton-Jacobi partial differential equation, which can be numerically solved with level-set methods \cite{mitchell2007toolbox}. 
Other approaches aim to prove that unsafe states are unreachable by finding invariant sets that include the initial set and exclude the unsafe set, which can be synthesized via optimization techniques involving algebraic geometry \cite{ghorbal2017hierarchy,tiwari2004nonlinear}. 
A popular category of methods in this direction achieve verification by constructing barrier functions, which provide Lyapunov-like guarantees regarding system behavior. The existence of a barrier function is sufficient to conclude the satisfiability of safety or reachability specifications \cite{prajna2007framework,ghaffari2018safety}. 
In addition, set-propagation techniques \cite{althoff2021set} are also worthy of attention.
Starting from initial states, these techniques iteratively compute sets representing all possible behaviors of the system in accordance with its dynamics. 
A variety of tools \cite{althoff2016cora,bogomolov2019juliareach} 
have been developed for reachable set computation in a broad range of dynamical systems, including continuous nonlinear system, hybrid system, and even neural networks.
Yet other methodologies based on satisfiability modulo theory (SMT), seek to verify reachability by searching for
counterexamples.
These approaches has consequently spurred the development of tools like \cite{
kong2015dreach,abate2021fossil} for verification. 
For discrete-time, chance-constrained reachability problems, methods based on moment-sum-of-squares have gained interest recently \cite{wang2020non}.
As discussed in \cite{chen2022reachability}, reachability verification typically relies on computation of finite-time reachable set from an initial set, which is then intersected with the target set. 
Indeed, exploiting system properties such as monotonicity \cite{coogan2020mixed}, 
symmetries \cite{maidens2018exploiting,sibai2020multi}, and decomposition based on system structures \cite{chen2016decomposed}, 
can scale up reachability analysis techniques.
However, this explicit computation of reachable sets, which is often expensive and/or conservative, may not be required in principle since the main goal of verification is to obtain a ``yes" or ``no" answer, and not to obtain actual reachable sets.

\color{black}

\begin{figure*}[htbp!]
    \centering
    \includegraphics[width=0.9\linewidth]{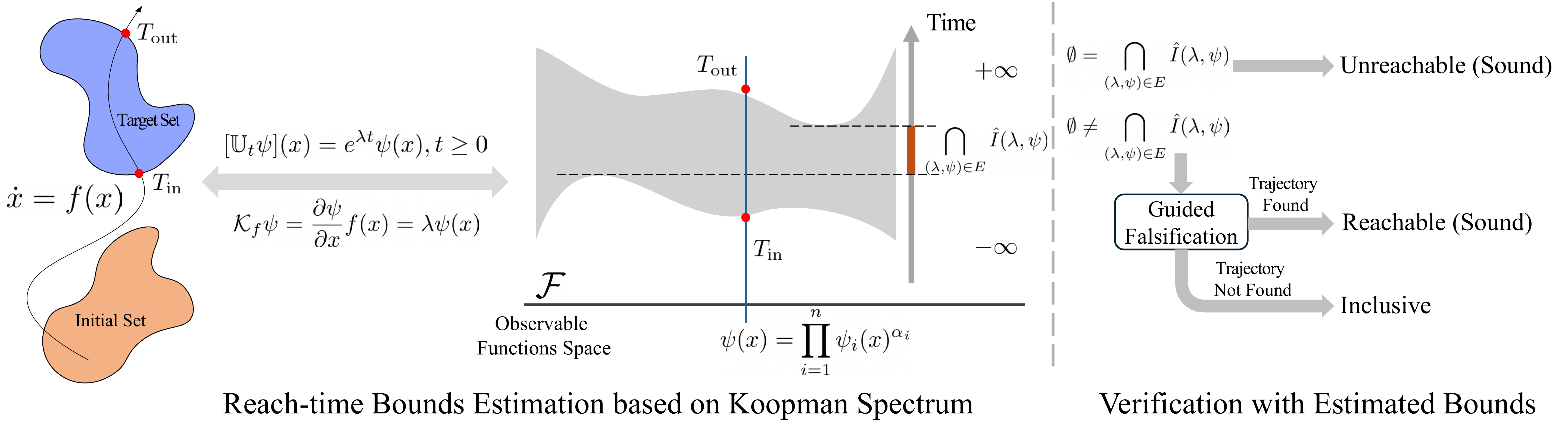}
    \caption{
    Reachability verification pipeline with time-to-reach bounds.
    }
    \label{fig:pipeline}
    \vspace{-1em}
\end{figure*}

Koopman operator theory \cite{koopman1931hamiltonian} provides a powerful framework for globally linearizing nonlinear systems, thus opened new avenues for reachability analysis, and offers a compelling alternative to traditional methods.
This has led to the development of various methodologies for Koopman theory-based reachability analysis.
One of the first algorithms in this direction was proposed in \cite{bak2021reachability}, where the authors verify the reachability of black-box nonlinear systems with Koopman linearization and zonotope over-approximations of nonlinear initial sets.
Subsequently, \cite{bak2022reachability} advance this idea by employing random Fourier features to improve the accuracy of linearization, and combining Taylor models with polynomial zonotope refinement to handle the nonlinear transformation of the initial states.
The reachability of the Koopman bilinear form of control-affine nonlinear systems is discussed in \cite{goswami2021bilinearization}.
In contrast, other methods seek to utilize the Koopman spectrum to analyze the reachability of the system. Similarly, the authors in \cite{umathe2022reachability} use the Koopman operator representation to obtain linear dynamical representation of autonomous nonlinear systems, which aids in reachable set computations which aids in reachable set propagation.

In a similar spirit, 
we propose a novel reachability verification approach based on Koopman spectrum, which eliminates the necessity for explicitly computing reachable sets and focuses on temporal rather than spatial information to verify reachability.
By leveraging the spectrum of the Koopman Operator to decouple reachability verification from the precise system dynamics, our framework gains the crucial flexibility to incorporate diverse Koopman eigenpairs evaluation methods, thereby paving the way for constructing efficient and scalable reachability verification solutions.
The main contributions of this work are summarized as follows:
\begin{itemize}
    \item We introduce a novel Koopman spectrum-based necessary condition for reachability verification. This condition reframes the reachability problem as checking whether the time-to-reach bounds are non-empty, thus avoiding explicit computation of reachable sets and mitigates the dependency on precise modeling of the dynamic.
    \item Building upon this necessary condition, we further relax the estimates of the time-to-reach bounds through parameterization via principal eigenpairs. This relaxation allows the problem can be efficiently solved as linear programming (LP) problems in decision variables that scale linearly with the state-dimensions.
    \item  
    Finally, we propose an reachability verification framework as depicted in Fig. \ref{fig:pipeline}, and demonstrate its practicality through a comparative analysis against existing methodologies on challenging benchmarks, including those with non-convex sets, periodic orbits, and high-dimensional systems.
\end{itemize}\vspace{-0.2cm}
\color{black}
\section{Preliminaries}

\textit{Notations:}
We use $\mathbb{R}$ and $\mathbb{C}$ to denote the set of real and complex numbers, respectively.
$\mathbb{R}^n$ denotes the $n-$dimensional real space. 
The space of all $k-$times continuously differentiable functions on domain $X$ is denoted by $\mathcal{C}^k(X)$.
The notation $\text{spec}(M)$ denotes the spectrum of a matrix $M \in \mathbb{C}^{n \times n}$. $\operatorname{Re}(\cdot)$ and $\operatorname{Im}(\cdot)$ denote the real and imaginary parts of an argument, respectively. $|\cdot|$ is used to denote absolute value of a real or complex number and likewise $\angle \cdot $ denotes the phase angle of a complex scalar.

In this paper, we consider continuous-time dynamical systems of the form
\begin{equation}
\label{definition:dynamic system}
    \frac{d}{dt}x(t) = f(x(t))
\end{equation}

where $f: X \rightarrow \mathbb{R}^n$ is a continuously differentiable map and $x(t) \in \mathbb{R}^n$ denotes the state in the compact set $X \in \mathbb{R}^n$ at time $t \geq 0$. 
The flow map $s_t: X \rightarrow X$ of system \eqref{definition:dynamic system} is given by $s_t(x)=x+\int_0^t f(x(\tau)) d\tau$, for all $t \geq 0$ and $x \in X$. 
We are interested in the following problem:

\begin{problem}
    (Reachability Verification)
    Consider a dynamical system \eqref{definition:dynamic system} with an initial set $X_0 \subset X$ and a target set $X_F \subset X$. Reachability verification aims to determine whether there exists $t \geq 0$ and $x_0 \in X_0$ such that $s_t(x_0) \in X_F$.
\end{problem}
  To tackle this problem, our approach reformulate the reachability verification conditions in terms of Koopman operator's spectrum, 
  relying upon the key definitions presented below.

\begin{definition} (Koopman Operator)
Let $\mathcal{F}$ be a Banach space of scalar-valued functions $\phi(x): X \rightarrow \mathbb{C}$. Then,
the Koopman operator $\mathbb{U}_t: \mathcal{F} \rightarrow \mathcal{F}$ corresponding to the dynamics (\ref{definition:dynamic system}) is defined as
\begin{equation}
    [\mathbb{U}_t \phi](x) = \phi(s_t(x))
\end{equation}
where $\phi(x) \in \mathcal{F}$ commonly refers to as an observable function. We assume $\mathcal{F} \subseteq \mathcal{C}^1(X)$ in this paper.




\end{definition}

\begin{definition} (Koopman Eigenvalues and Eigenfunctions)
    An observable function $\phi(x) \in \mathcal{F}$ is an eigenfunction of the Koopman operator corresponding to the eigenvalue $\lambda$ if
    \begin{equation}
    \label{eq: eigens}
        [\mathbb{U}_t \phi](\textcolor{blue}{x}) = e^{\lambda t}\phi(x), \ t \geq 0.
    \end{equation}

\end{definition}

In the following, we call an eigenfunction $\phi(x)$ as a principal eigenfunction if the corresponding eigenvalue $\lambda$ is also an eigenvalue of the jacobian matrix $\nabla f$ evaluated at equilibrium $x_e.$
Furthermore, we define the set of principal eigenpairs as the minimal generator $G$ of the set given by
    \begin{equation*}
    E = \left\{\left(\sum_{i=1}^{m}n_i\lambda_i,\prod_{i=1}^{m}\phi_i^{n_i}\right) \bigg\lvert \;(\lambda_i,\phi_i) \in G,\; n_i \in \mathbb{N}     \right\}.
    \end{equation*}
    where $E$ is the semigroup of eigenpairs $(\lambda, \phi)$.  
The concept of principal eigenfunctions was introduced in \cite{mohr2016koopman}, and uniqueness and existence of these principal eigenfunctions have been studied rigorously in \cite{kvalheim2021existence}. Due to its principal algebraic structure, the cardinality of set $E$ is countably infinite. As pointed out in \cite{bollt2021geometric}, this set doesn't contain all the possible Koopman eigenfunctions, and the authors consequently introduced the concept of `primary eigenfunctions,' to allow the exponents $n_i$ in the definition of set $E$ above to be real valued. We shall follow this extension by \cite{bollt2021geometric} for parameterizing the Koopman spectrum in our paper.



\section{Time-to-reach bounds}

Consider a set of initial conditions $X_{0}\subset X$ and a target final set
$X_{F}\subset X$ such that $x_{e}\notin X_{0}\cup X_{F}$. Additionally, given
any function $g: X \rightarrow 
\mC
$ and a set $V,W \subset X$, we define the following notations for convenience:
\begin{gather}
    \overline{g}(V) \doteq \sup_{x\in V}
    |g(x)|
    ,\quad \quad \underline{g}(V) \doteq
    \inf_{x\in V}
    |g(x)|
    \label{eq:def_over_under}\\ \Lc^{g}(W,V) \doteq \log\left(\frac{\overline{g}(V)}{\underline{g}(W)}
    \right).
\end{gather}
\noindent
Let us consider any Koopman eigenfunction 
$\psi \in
\F$
, corresponding to eigenvalue 
$\lambda \in \mC$
. We
can now state the following theorem on time-to-reach bounds.

\begin{theorem}\label{th:necessary_real}
    Given a dynamical system \eqref{definition:dynamic system}, let $(\lambda,\psi)$ be any non-trivial Koopman eigenpair defined over the set $X$. If a set $X_F$ is reachable from a set $X_0$, then the time-to-reach, T, satisfies the following bounds:
    \begin{equation}\label{eq:real_T}
        \Real{(\lambda)}T \in \bigg[-\Lc^{\psi}(X_{F},X_{0}) \, , \Lc^{\psi}(X_{0}
        ,X_{F}) \bigg].
    \end{equation}
   
\end{theorem}

\begin{proof}
    Let $x_{0}$ be some point in $X_{0}$ such that $s_{T}(x_{0})\in X_{F}$ for some
    $T>0$. By definition, $\psi(s_{T}(x_{0})) = e^{\lambda T}\psi(x_{0})$, which implies $|\psi(s_{T}(x_{0}))| = e^{\Real{(\lambda)} T}|\psi(x_{0})|$. 
    We note that $\underline{\psi}(X_F) \le |\psi(s_{T}(x_{0}))| \le \overline{\psi}(X_F)$ and likewise, $\underline{\psi}(X_0) \le |\psi(x_{0})| \le \overline{\psi}(X_0)$.  These inequalities lead to
    \begin{equation*}
        e^{\Real{(\lambda)} T}\underline{\psi}(X_0) \le \overline{\psi}(X_F)\,
        \text{ and }   \underline{\psi}(X_F) \le e^{\Real{(\lambda)} T}\overline{\psi}(X_0).
    \end{equation*}
    Taking natural logarithm on both sides and rearranging the terms
    leads us to equation \eqref{eq:real_T}, thus completing the proof. \end{proof}
We shall denote the reach-time interval containing $T$ (obtained by moving $\Real{(\lambda)}$ in equation \eqref{eq:real_T} to the right-hand side) as $I_{mag}(\lambda,\psi).$ 

In case of complex eigenfunctions, one can utilize information encoded in the phase
(or argument) of the eigenfunctions and the imaginary part of the corresponding eigenfunction in addition to its modulus, for further refining the time intervals
$I(\lambda,\psi)$. For brevity of presentation, let us introduce the notation
$\Ac^{g}(W,V) \doteq \overline{\angle g}(V) - \underline{\angle g}(W)$ given any
complex function $g: X \rightarrow \mathbb{C}$ and sets $V,W \subset X$. Thus, we
present the following result for reach bounds in terms of the Koopman
eigenfunction phase.

\begin{theorem}
    \label{th:necessary_imag} 
    Consider again the dynamical system \eqref{definition:dynamic system}, along with a non-trivial Koopman eigenpair $(\lambda,\psi
    )$ over $X$
    . If a set $X_{F}$ is reachable from an initial set $X
    _{0}$, then the time-to-reach, $T$, would necessarily satisfy
    \begin{equation}\label{eq:imag_T}
        \Imag{(\lambda)}T \in \bigg[-\Ac^{\psi}(X_{F},X_{0}) \, , \Ac^{\psi}(X_{0}
        ,X_{F}) \bigg] + 2m\pi
    \end{equation}
    for some $m\in\mathbb{Z}.$
\end{theorem}
\begin{proof}
    For any $x\in X_{0}$ and $t\ge 0$, we have
    \begin{eqnarray*}
        \psi(s_t(x)) = e^{\lambda t}\psi(x) = |\psi(x)|e^{\Real{(\lambda) t}}e^{j\Imag{(\lambda)} t + j\angle \psi(x)}\\
        \implies \angle \psi(s_t(x)) = \angle\psi(x) + \Imag{(\lambda)}t - 2m\pi
    \end{eqnarray*}
    for some $m \in \mZ$. Now, if $\exists T>0$ such that $s_{T}(x)\in X_{F}$, then we have $\overline{\angle \psi}(X_F) \ge \angle \psi(s_T(x)) = \angle\psi(x) + \Imag{(\lambda)}T
        - 2m\pi$, which greater than or equal to $\underline{\angle\psi}(X_0) + \Imag{(\lambda)}T - 2m\pi$. Rearranging, we get $\Imag{(\lambda)}T \le \overline{\angle \psi}(X_F) -
        \underline{\angle\psi}(X_0) + 2m\pi.$
    Similarly, $\underline{\angle \psi}(X_F) \le \angle \psi(s_T(x)) = \angle\psi(x) +
        \Imag{(\lambda)}T - 2m\pi$, which is less than or equal to $\overline{\angle\psi}(X_0) + \Imag{(\lambda)}T
        - 2m\pi.$ This leads to $\Imag{(\lambda)}T \ge \underline{\angle \psi}(X_F)
        - \overline{\angle\psi}(X_0) + 2m\pi.$
    This completes the proof.
\end{proof}
We shall denote the time-to-reach interval derived from equation \eqref{eq:imag_T}, by taking $\Imag{(\lambda)}$ to the right-hand side of the equation, as
$I_{phase}(\lambda,\psi)$. Note that unlike $I_{mag}(\lambda,\psi)$, the set $I_{phase}(\lambda,\psi)$ is a collection of intervals separated by a period of $\frac{2\pi}{\Imag{(\lambda)}}$. We define $I(\lambda,\psi) \doteq I_{mag}(\lambda,\psi) \cap I_{phase}(\lambda,\psi)$, and can now present the following result.

\begin{corollary}\label{cor:main}
    For the dynamical system \eqref{definition:dynamic system}, a necessary condition for a target set $X_{F}  \subset X$ to be reachable from an initial set $X_{0} \subset X$ is that intersection of all intervals $I(\lambda,\psi)$ is non-empty
    . In other
    words,
        \begin{gather*}
            \left\{s_t(x_0) \,|\, x_0 \in X_0 \right\}
            \bigcap X_F \neq \emptyset \text{ for some } t>0 \\
            \implies \bigcap_{(\lambda,\psi)\in E} I(\lambda,\psi) \neq \emptyset.
        \end{gather*}
\end{corollary}

\subsection{Time-bounds parameterized through principal eigenfunctions}
\label{sub:principle_bounds} 
Verifying reachability using corollary \ref{cor:main}
involves taking intersections of (possibly) an infinite number of intervals.
Towards making this procedure conducive to practical implementation, 
we relax the estimation of $I(\lambda,\psi)$ by constructing its over-approximation $\hat{I}(\lambda,\psi)$, which can be parameterized via a finite number of principal eigenpairs.


\begin{theorem}
    \label{th:relaxation} 
    Let $I_{mag}(\lambda,\psi)$ be defined as in theorem \ref{th:necessary_real} for eigenpair $(\lambda,\psi
    )\in E$, such that
    $\psi = \prod_{i=1}^{n}\psi_{i}^{\alpha_i}$ and
    $\lambda = \sum_{i=1}^{n}\alpha_{i}\lambda_{i}$ with $\alpha_{i}\ge 0$ and principal eigenpairs $(\lambda_i,\psi_i)\in G$
    . Then, we have
    \begin{enumerate}
        \item[(a)] For $\Real{(\lambda)}>0$, we have $I_{mag}(\lambda,\psi) \subseteq \hat{I}_{mag}(\lambda,\psi) \doteq$ 
    \[ \hspace{0.8em}\left[\frac{\sum_{i=1}^{n}\alpha_{i}\Lc^{\psi_i}(X_{F},X_{0})}{-\sum_{i=1}^{n}\alpha_{i}\Real{(\lambda_{i})}}
        , \frac{\sum_{i=1}^{n}\alpha_{i}\Lc^{\psi_i}(X_{0},X_{F})}{\sum_{i=1}^{n}\alpha_{i}\Real{(\lambda_{i})}}
        \right] 
    \] 

    \item[(b)] For $\Real{(\lambda)}<0$, we have $I_{mag}(\lambda,\psi) \subseteq \hat{I}_{mag}(\lambda,\psi) \doteq$
    \[
        \hspace{0.8em} \left[\frac{\sum_{i=1}^{n}\alpha_{i}\Lc^{\psi_i}(X_{0},X_{F})}{\sum_{i=1}^{n}\alpha_{i}\Real{(\lambda_{i})}}
        , \frac{\sum_{i=1}^{n}\alpha_{i}\Lc^{\psi_i}(X_{F},X_{0})}{-\sum_{i=1}^{n}\alpha_{i}\Real{(\lambda_{i})}}
        \right]
    \]
    \end{enumerate}
\end{theorem}
\begin{proof}
    Since logarithm is a monotonically increasing function, we can swap the
    order of $\log$ and $\sup$, such that for any set $S \subseteq X$, we have $\log \sup_{x\in S}|\psi(x)| = \sup_{x\in S} \log |\psi(x)| = \sup_{x\in S}
        \sum_{i=1}^n \alpha_i \log |\psi_i(x)|.$ Applying triangular inequality, the last term is bounded by $\sum_{i=1}^n \alpha_i \sup_{x\in S}
        \log |\psi_i(x)| = \sum_{i=1}^n \alpha_i \log \sup_{x\in S}|\psi_i(x)|.$
    In other words,\vspace{-1em}
    \begin{equation}
        \label{eq:sup_relaxed}\log \overline{\psi}(S) \le \sum_{i=1}^{n}\alpha_{i}
        \log \overline{\psi_i}(S). \vspace{-1em}
    \end{equation}
    \noindent
    Similarly, we can show that \vspace{-1em}
    \begin{equation}
        \label{eq:inf_relaxed}\log \underline{\psi}(S) \ge \sum_{i=1}^{n}\alpha_{i}
        \log \underline{\psi_i}(S). \vspace{-1em}
    \end{equation}
    Now using equations \eqref{eq:sup_relaxed} and \eqref{eq:inf_relaxed} in equation
    \eqref{eq:real_T}, we get the following when $\Real{(\lambda)}>0$:
    \begin{align*}
              &-\frac{1}{\Real{(\lambda)}}\Lc^\psi(X_F,X_0) = \frac{1}{\Real{(\lambda)}}\log\left(\frac{\underline{\psi}(X_{F})}{\overline{\psi}(X_{0})}\right) \notag\\
              &\ge  \frac{\sum_{i=1}^{n}\alpha_{i}\log \left(\frac{\underline{\psi_i}(X_F)}{\overline{\psi_i}(X_0)}\right)}{\sum_{i=1}^{n}\alpha_{i}\Real{(\lambda_{i})}} =\frac{\sum_{i=1}^{n}\alpha_{i}\Lc^{\psi_i}(X_{F},X_{0})}{-\sum_{i=1}^{n}\alpha_{i}\Real{(\lambda_{i})}}, \text{ and} \\
        \notag 
              & \frac{1}{\Real{(\lambda)}}\Lc^{\psi}(X_{0},X_{F}) = \frac{1}{\Real{(\lambda)}}\log\left(\frac{\overline{\psi}(X_{F})}{\underline{\psi}(X_{0})}\right) \notag   \\
              & \le \frac{\sum_{i=1}^{n}\alpha_{i}\log \left(\frac{\overline{\psi_i}(X_F)}{\underline{\psi_i}(X_0)}\right)}{\sum_{i=1}^{n}\alpha_{i}\Real{(\lambda_{i})}}= \frac{\sum_{i=1}^{n}\alpha_{i}\Lc^{\psi_i}(X_{0},X_{F})}{\sum_{i=1}^{n}\alpha_{i}\Real{(\lambda_{i})}}. 
    \end{align*}
    In the same way, for the case of $\Real{(\lambda)} < 0$, we have
    \begin{align*}
        -\frac{1}{|\Real{(\lambda)}|}\Lc^{\psi}(X_{0},X_{F}) & \ge \frac{\sum_{i=1}^{n}\alpha_{i}\log \left(\frac{\underline{\psi_i}(X_0)}{\overline{\psi_i}(X_F)}\right)}{-\sum_{i=1}^{n}\alpha_{i}\Real{(\lambda_{i})}}, \text{ and}    \\
        \notag                                       
        \frac{1}{|\Real{(\lambda)}|}\Lc^{\psi}(X_{F},X_{0})  & \le \frac{\sum_{i=1}^{n}\alpha_{i}\log \left(\frac{\overline{\psi_i}(X_0)}{\underline{\psi_i}(X_F)}\right)}{-\sum_{i=1}^{n}\alpha_{i}\Real{(\lambda_{i})}}. 
    \end{align*}
    This completes our proof.
\end{proof}
\begin{remark}
Theorem \ref{th:relaxation} defines an interval $\hat{I}_{mag}(\lambda,\psi)$ paramterized by the principal eigenpairs, which provides an over-approximation of $I_{mag}(\lambda,\psi)$. Note this over-approximation is tight and equality occurs, for example, when the sets $X_0$ and $X_F$ are expressed in the form of sublevel sets and superlevel sets of the principal eigenfunctions i.e., as $\left\{ x \in X\, | \, \underline{\gamma_i} \le \psi_i(x) \le \overline{\gamma_i},\, i=1,\ldots,n\right\}.$ This is because $\psi(x)$ can then be maximized/minimized over the sets $X_0$ and $X_F$ by individually maximizing/minimizing the principal eigenfunctions, turning inequalities \eqref{eq:sup_relaxed} and \eqref{eq:inf_relaxed} into equalities.
\end{remark} 

Just like in the case of 
$I_{mag}(\lambda,\psi)$, one can obtain a tight, finite-dimensional over-approximation $\hat{I}_{phase}(\lambda,\psi)$ of the interval $I_{phase}(\lambda,\psi)$
as follows. 

\begin{theorem}
    \label{th:complex_relaxation} Given a complex (and non-trivial) eigenpair $(\lambda
    ,\psi)\in E$, parameterized by $\psi = \prod_{i=1}^{n}\psi_{i}^{\alpha_i}$
    and $\lambda = \sum_{i=1}^{n}\alpha_{i}\lambda_{i}$, where $\alpha_{i}\ge 0$,
    we have
    
    \noindent
    \begin{enumerate}
        \item [(a)] For $\Imag{(\lambda)}> 0, \text{ and some }m \in \mathbb{Z}$,\vspace{-1em}
    \begin{gather*}
        \hspace{0.8em}I_{phase}(\lambda,\psi) 
        \subseteq
        \Bigg[\frac{\sum_{i=1}^{n}\alpha_{i}\Ac^{\psi_i}(X_{0},X_{F})
        + 2m\pi}{\sum_{i=1}^{n}\alpha_{i}\Imag{(\lambda_i)}}\; , \hspace{8em}\\ \hspace{2em}
        \frac{\sum_{i=1}^{n}-\alpha_{i}\Ac^{\psi_i}(X_{F},X_{0}) + 2m\pi}{\sum_{i=1}^{n}\alpha_{i}\Imag{(\lambda_i)}}
        \Bigg] \doteq \hat{I}_{phase}(\lambda,\psi) 
    \end{gather*}

    \item[(b)] For $\Imag{(\lambda)}< 0, \text{ and some }m \in \mathbb{Z}$,
    \begin{gather*}
        \hspace{0.8em}I_{phase}(\lambda,\psi) 
        \subseteq
        \Bigg[\frac{\sum_{i=1}^{n}\alpha_{i}\Ac^{\psi_i}(X_{F},X_{0})
        +2m \pi}{-\sum_{i=1}^{n}\alpha_{i}\Imag{(\lambda_i)}}\; , \hspace{8em}\\
        \hspace{2em}\frac{\sum_{i=1}^{n}\alpha_{i}\Ac^{\psi_i}(X_{0},X_{F})+2m
        \pi}{\sum_{i=1}^{n}\alpha_{i}\Imag{(\lambda_i)}}\Bigg] \doteq \hat{I}_{phase}(\lambda,\psi) 
    \end{gather*}
    \end{enumerate}
\end{theorem}
\begin{proof}
    We begin by noting that $\angle \prod_{i=1}^{n}\psi_{i}^{\alpha_i}= \sum_{i=1}
    ^{n}{\alpha_i}\angle \psi_{i}+ 2k\pi$ for some $k\in \mathbb{Z}$. For any
    set $S \subseteq X$, this gives us
    \begin{eqnarray*}\allowdisplaybreaks
        &&\overline{\angle \psi}(S) = \sup_{x\in S}\angle \psi(x) = \sup_{x\in S}
        \sum_{i=1}^{n}{\alpha_i}\angle \psi_i(x) + 2k\pi \underset{\begin{array}{c}\text{\scriptsize triangular} \vspace{-0.5em}\\ \text{\scriptsize inequality}\end{array}}{\le}\\ 
        &&\sum_{i=1}^{n} \sup_{x\in S} {\alpha_i}\angle \psi_i(x) + 2k\pi = \sum_{i=1}^{n}
        \alpha_i \overline{\angle \psi_i}(S) + 2k\pi, \\
        && \text{ and similarly } \underline{\angle \psi}(S)
         \underset{\begin{array}{c}\text{\scriptsize triangular} \vspace{-0.5em}\\ \text{\scriptsize inequality}\end{array}}{\ge}
        \sum_{i=1}^{n} \alpha_i \underline{\angle \psi_i}(S) + 2k\pi.
    \end{eqnarray*}
    Therefore,
    \begin{eqnarray*}
        \Ac^\psi(X_0,X_F) &\le& \sum_{i=1}^{n} \alpha_i\overline{\angle \psi_i}(X_F)
        - \sum_{i=1}^{n} \alpha_i\underline{\angle \psi_i}(X_0) \text{ and }\\ -\Ac^\psi(X_F,X_0)
        &\ge& \sum_{i=1}^{n} \alpha_i\underline{\angle \psi_i}(X_F) - \sum_{i=1}^{n}
        \alpha_i\overline{\angle \psi_i}(X_0)
    \end{eqnarray*}
    The rest of the proof follows simply by using these above inequalities in
    the expression for $I_{phase}
    (\lambda,\psi)
    $ obtained from theorem \ref{th:necessary_imag}
    for the two cases ($\Imag{(\lambda)}<0$ and $\Imag{(\lambda)}>0$).
\end{proof}

\begin{remark}
    Note that, although the motivation of our algorithm stems from estimating
    reach-time bounds, the upper limit of the interval $\cap_{(\lambda,\psi)\in E} I(\lambda,\psi)$ also represents the latest time the system
    remains within the target sets. In other words, our method estimates both the
    reach-time and the duration within the target sets, as illustrated in the example below.
\end{remark}

\begin{example}
    (Duffing's oscillator) Consider the nonlinear dynamics
    \[
        \left[
        \begin{array}{c}
            \dot{x}_1 \\
            \dot{x}_2
        \end{array}\right] = \left[
        \begin{array}{c}
            {x}_2                  \\
            -0.5x_2 - x_1(x_1^2-1)
        \end{array}\right],
    \]
    with stable equilibrium points at $(\pm 1,0)$ and a saddle equilibrium point at $(0,0)$. We start with non-convex initial and target sets given by $X_{0}=h_{-0.75,1.85,1,2,0.05}
    (x)\leq-0.01$ and $X_{F}=h_{0.65,0.25,3,4,0.1}(x)\leq-0.9$, as the initial
    set and target set, respectively, where
    $h_{x_1^c,x_2^c,a,b,s}=-(1-\frac{x_{1}-x_{1}^{c}}{3}+a(\frac{x_{2}-x_{2}^{c}}{s}
    )^{5}+b(\frac{x_{1}-x_{1}^{c}}{s})^{3}) \cdot e^{-((\frac{x_{1}-x_{1}^{c}}{s})^2+(\frac{x_{2}-x_{2}^{c}}{s})^2)}$.
    By taking the intersection of the estimates of 
$\cap_{(\lambda,\psi)\in E} \hat{I}_{mag}(\lambda,\psi) = [3.83, 7.64]$, and $\cap_{(\lambda,\psi)\in E} \hat{I}_{phase}(\lambda,\psi) = [-3.98,-3.19]$ with a period $P=4.52$, 
    we get our final reach time interval estimate as $[5.05,5.83]$.
    The simulation result in Fig. \ref{fig:ex30} validate this estimation.

\begin{figure}[htbp!]
    \centering
    \begin{subfigure}[t]{0.49\linewidth}
    \centering
\includegraphics[width=\linewidth]{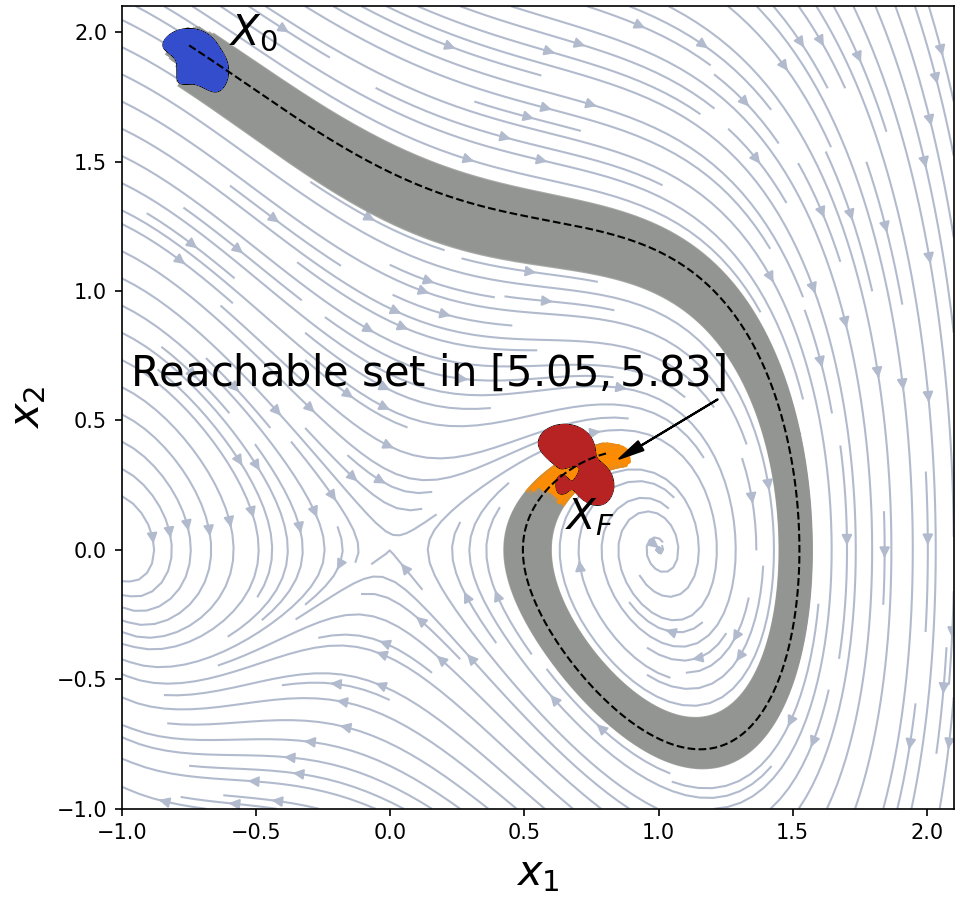}
    \caption{
    Reachable sets of the Duffing's oscillator system from non-convex
    initial set $X_{0}$ for estimated reach-time bounds $I=[5.05, 5.83]$ (orange) with a simulated reachable trajectory (black dashed line), the full reachable set (gray) in $[0,5.84]$ (gray) shown for reference.
    }
    \label{fig:ex30}
    \end{subfigure}
    \hfill
    \begin{subfigure}[t]{0.49\linewidth}
    \centering
\includegraphics[width=\linewidth]{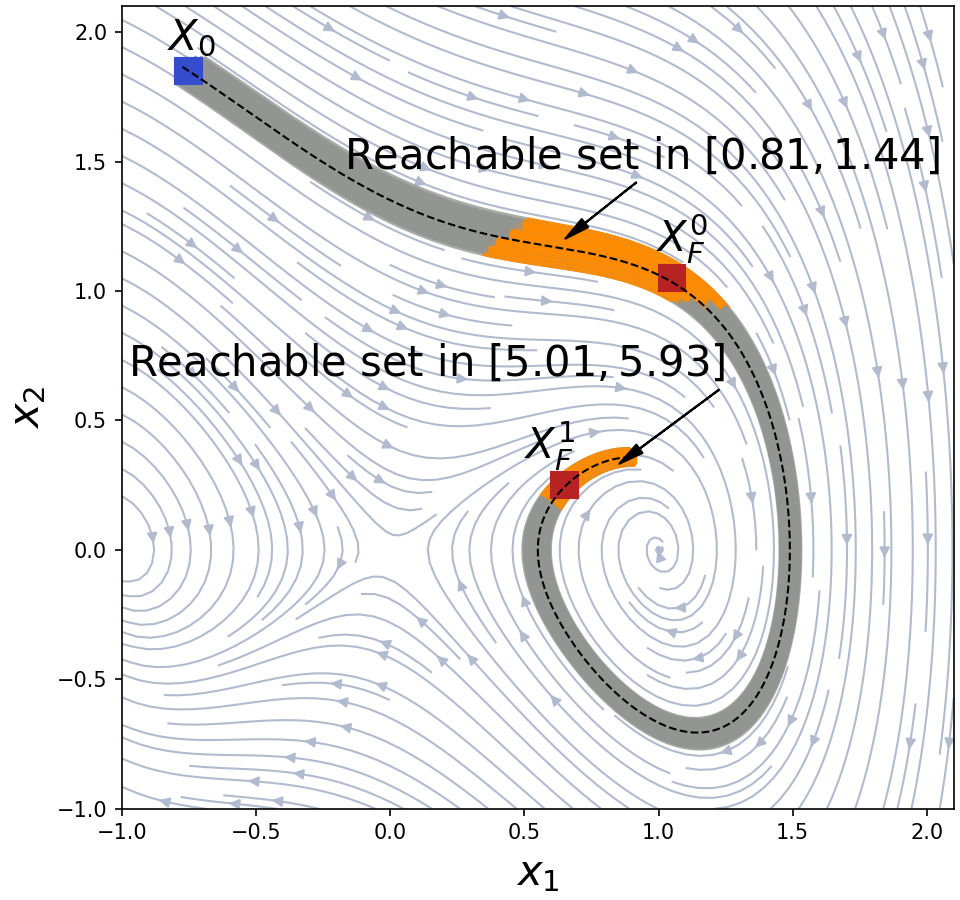}
    \caption{
    Reachable set of the Duffing's oscillator system from initial
    set $X_{0}$ to reach a disjoint target set
    $X_{F}=X_{F}^{0} \bigcup X_{F}^{1}$ for estimated reach-time bounds $I_{0}
    =[0.82, 1.44]$ (orange), $I_{1}=[5.01, 5.93]$ (orange) with a
    simulated reachable trajectory (black dashed lines). The full reachable set for duration $[0,5.93]$ provided for reference.
    }
    \label{fig:ex31}
    \end{subfigure}
    \caption{Reach-time bounds estimation for the Duffing's oscillator system.}
    \vspace{-1em}
\end{figure}
    Next, we consider an initial set $X_{0}=[-0.8,-0.7]\times[1.8,1.9]$ and a target
    set $X_{F}=X_{F}^{0} \bigcup X_{F}^{1} = [1.0,1.1]\times[1.0,1.1] \bigcup [0.
    6,0.7]\times[0.2,0.3]$ which is the union of two disjoint sets. Our method
    provides the estimate of the reachable time as a union of two disjoint
    intervals, $I_{0}=[0.815,1.443]$ and $I_{1}=[5.009,5.934]$, indicating the
    possible existence of trajectories from $X_{0}$ that may pass through the
    disjoint target sets twice. Specifically, the first reach time to the target set
    ranges from $0.815$ to $1.443$, and the trajectories may re-enter the target
    set as early as $5.009$ and exit no later than $5.934$. The simulated
    trajectories in Fig. \ref{fig:ex31} corroborate these estimates.
    
    \textbf{Robustness to eigenfunction estimation errors:} The final computed reach-time bounds can be distorted by inaccuracies in principal eigenpairs. 
    We demonstrate this
    in our Duffing's example by considering the inaccurate principal eigenfunction $\tilde{\psi_i} = (1+\delta(x)) \psi_i(x)$ with state-dependent multiplicative noise for the nonconvex case, where $\psi_i(x)$ refers to the exact principal eigenfunction of the system, and $\delta(x) \sim U(-\epsilon,\epsilon)$ is a random variable following uniform distribution. 
    Fig. \ref{fig:bound with noise} illustrates the distribution of the estimated reach-time bounds obtained by our method using eigenfunction modulus under varying levels of inaccuracy of the principal eigenfunctions.
    \begin{figure}[htbp!]
        \centering
        \includegraphics[width=\linewidth]{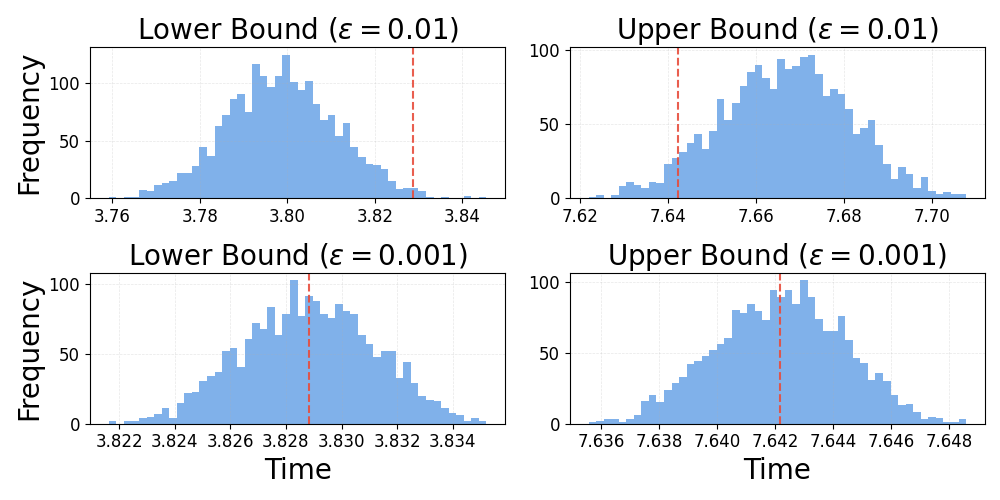}
        \caption{
        Distribution of estimated reach-time bounds $\hat{I}_{mag} (\lambda, \psi)$ with inaccurate principal eigenfunctions. Red dashed lines indicate error-free case, i.e. when $\delta(x) \equiv 0$.}
        \label{fig:bound with noise}\vspace{-1em}
    \end{figure}
\end{example}
\subsection{Numerical implementation}
As previously mentioned in theorems \ref{th:relaxation} and
\ref{th:complex_relaxation}, we can estimate the upper and lower bounds of the reachable
time interval, by exploring all possible
$\alpha_{i}\in [0,+\infty), i=1,\ldots,n$, which parameterize all Koopman eigenfunctions
and eigenvalues in terms of the principal Koopman spectrum. For reachable time
bounds derived from the real parts of principal eigenfunctions, the determination
of each lower (upper) bound essentially defines a linear-fractional programming (LFP)
problem to maximize (minimize) a ratio of affine functions as follows.
\begin{equation}
    \max_{z \in \mathbb{R}^n}\;\frac{c^\top z+e}{d^\top z+f} \quad\text{s.t.
        } Az \leq b, \tag{\textsc{Problem I}}
\end{equation}
where $c,d \in \mathbb{R}^{n}$, $b \in \mathbb{R}^{m}$, $A \in \mathbb{R}^{m
\times n}$ and $e,f\in \mathbb{R}$ are constants. By applying the Charnes-Cooper
transformation, this LFP problem can be effectively solved as the following linear
programming (LP) problem:
\begin{equation}
\max_{y \in \mathbb{R}^n}\;c^\top y+et \quad \text{s.t.}
\Biggl\{\begin{array}{c}
     Ay \leq bt,  \\
     d^\top y + ft =1, \\
        t \geq 0.
\end{array} \quad
\tag{\textsc{Problem II}}
\end{equation}
\begin{lemma}[In \cite{cooper1962programming}, page 183]
    If $z^{*}$ is an optimal solution of \textsc{problem I} such that
    $d^\top z^{*}+ f > 0$, and if $(y^{*},t^{*})$ is an optimal solution of \textsc{problem
    II}, then $y^{*}/t^{*}$ is an optimal solution of \textsc{problem I}.
\end{lemma}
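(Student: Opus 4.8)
The plan is to show that the Charnes--Cooper substitution $t=(d^Tz+f)^{-1}$, $y=tz$ induces an objective-preserving correspondence between the feasible points of \textsc{Problem~I} and those of \textsc{Problem~II}, so that optimality transfers across. Throughout, write $v_{\mathrm{I}}=(c^Tz^*+e)/(d^Tz^*+f)$ for the optimal value of \textsc{Problem~I} (finite, since $z^*$ is optimal and $d^Tz^*+f>0$) and let $v_{\mathrm{II}}$ denote the optimal value of \textsc{Problem~II}.

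First I would treat the forward map. Given any $z$ feasible for \textsc{Problem~I} with $d^Tz+f>0$, set $t=(d^Tz+f)^{-1}>0$ and $y=tz$. Dividing $Az\le b$ by the positive number $d^Tz+f$ gives $Ay\le bt$, and $d^Ty+ft=t(d^Tz+f)=1$, so $(y,t)$ is feasible for \textsc{Problem~II}; moreover $c^Ty+et=t(c^Tz+e)=(c^Tz+e)/(d^Tz+f)$, so the two objective values agree. Applying this to $z^*$ yields a feasible point of \textsc{Problem~II} of value $v_{\mathrm{I}}$ with $t>0$, so $v_{\mathrm{II}}\ge v_{\mathrm{I}}$ and \textsc{Problem~II} admits an optimizer with positive last coordinate as soon as $v_{\mathrm{II}}=v_{\mathrm{I}}$.

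Next I would treat the backward map on $\{t>0\}$. If $(y,t)$ is feasible for \textsc{Problem~II} with $t>0$, then $z=y/t$ satisfies $Az\le b$ (divide $Ay\le bt$ by $t$) and $d^Tz+f=1/t>0$, and again $c^Ty+et=t(c^Tz+e)=(c^Tz+e)/(d^Tz+f)$, so $z$ is feasible for \textsc{Problem~I} with the same value. Hence every feasible point of \textsc{Problem~II} with $t>0$ has value at most $v_{\mathrm{I}}$; once the $t=0$ case is excluded this gives $v_{\mathrm{II}}\le v_{\mathrm{I}}$, and therefore $v_{\mathrm{I}}=v_{\mathrm{II}}$. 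Applying the backward map to the given optimal $(y^*,t^*)$ (for which $t^*>0$, so that $y^*/t^*$ is well-defined) produces a point $z=y^*/t^*$ feasible for \textsc{Problem~I} with value $c^Ty^*+et^*=v_{\mathrm{II}}=v_{\mathrm{I}}$, i.e.\ optimal for \textsc{Problem~I}, as claimed.

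I expect the only real obstacle to be handling the degenerate case $t=0$. Suppose some feasible $(y^*,0)$ of \textsc{Problem~II} had value $c^Ty^*>v_{\mathrm{I}}$; then $Ay^*\le 0$ and $d^Ty^*=1$, so the ray $\{z^*+sy^*:s\ge 0\}$ remains feasible for \textsc{Problem~I}, and the fractional objective along it, $(c^Tz^*+e+s\,c^Ty^*)/(d^Tz^*+f+s)$, has derivative in $s$ of the sign of $(d^Tz^*+f)(c^Ty^*-v_{\mathrm{I}})>0$, so it exceeds $v_{\mathrm{I}}$ for every $s>0$, contradicting optimality of $z^*$. Thus no $t=0$ point beats $v_{\mathrm{I}}$, which is exactly the exclusion used above; the remaining steps are the routine algebra of the substitution.
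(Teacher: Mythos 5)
The paper does not prove this lemma at all --- it is imported verbatim from Charnes--Cooper \cite{cooper1962programming} as a citation --- so there is no in-paper argument to compare against; your proposal supplies the missing proof, and it is correct. The two-way substitution $t=(d^Tz+f)^{-1}$, $y=tz$ versus $z=y/t$ is exactly the standard Charnes--Cooper correspondence, your algebra checks out (in particular $d^T(z^*+sy^*)+f=d^Tz^*+f+s>0$ along the ray, and the derivative of $s\mapsto(N_0+sN_1)/(D_0+s)$ is $(N_1D_0-N_0)/(D_0+s)^2$, of constant sign as you claim), and the $t=0$ ray argument correctly closes the only loophole in establishing $v_{\mathrm{II}}\le v_{\mathrm{I}}$. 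The one point worth flagging: your ray argument rules out a $t=0$ feasible point with value \emph{strictly greater} than $v_{\mathrm{I}}$, but not one with value exactly $v_{\mathrm{I}}$ (the objective is constant along the ray in that case), so \textsc{Problem II} can in principle possess an optimal solution with $t^*=0$ when the feasible region of \textsc{Problem I} is unbounded; you handle this by reading $t^*>0$ off the well-definedness of $y^*/t^*$ in the statement, which matches how the original Charnes--Cooper formulation (and the paper) implicitly treats it, so this is a fair interpretive choice rather than a gap.
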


\color{black}

As indicated in theorem \ref{th:complex_relaxation},
the estimated reach-time bounds based on the phase information can be
periodically distributed. 
Our goal is to identify the reach-time
bound for one cycle and the corresponding period.
To achieve a feasible tightest estimate of one reachable time range within a single
period (in the case of $\Imag{(\lambda)}>0$), we solve the following
optimization problem:
\begin{equation*}
    \arg \min_{\alpha_i} \;\frac{\sum_{i=1}^{n}\alpha_{i}c_{i}}{\sum_{i=1}^{n}\alpha_{i}d_{i}} \quad
    \text{s.t.} \Bigg\{        
    \begin{array}{c}
         \sum_{i=1}^{n}\alpha_{i}d_{i}>0,  \\
         \alpha_{i}>0,
    \end{array}
\end{equation*}
where
$c_{i}= \overline{\angle \psi_i}(X_{F}) - \underline{\angle \psi_i}(X_{F}) + \overline
{\angle \psi_i}(X_{0}) - \underline{\angle \psi_i}(X_{0})$
and $d_{i}= \Imag{(\lambda_i)}$. Given the optimal solution $\alpha^{*}=\{\alpha_{0}
^{*}, \cdots, \alpha_{n}^{*}\}$, the period of the reachable time bounds can be
determined as $P = \frac{2\pi}{\sum_{i=1}^{n}\alpha_{i}^{*}\Imag{(\lambda_i)}}$.
Computations for the case of $\Imag{(\lambda)}<0$ are done similarly.

\color{black}

\section{Numerical experiments}

In this section, we test our proposed method in addressing challenging
reachability verification problems, aiming to demonstrate its broad applicability
in the field of reachability analysis. 
Except for the benchmarks on system NL-SLC and NL-EIG
where the principal eigenfunctions are
known in closed-form, the values of principal eigenfunctions over given sets in all
other examples are numerically determined using the path-integral formula \cite{deka2023path}.
Note that our method is independent of the specific choice of estimation
technique for obtaining the principal eigenfunctions and one could use, for example,
Extended-DMD or other data-driven methods in the absence of  dynamic model.
All programs are run on an Intel Core i5-13400 CPU with 32GB RAM and solved using
\textsc{Mosek}. The computational time for each example are listed in the Table
\ref{tab:benchmark}.

\subsection{Applications}

\begin{example}
    \label{ex7} (Cart-pole system \cite{tedrake2009underactuated}) Consider the dynamic
    of a cart-pole system as follows:
    \begin{align*}
        \dot{x}= \begin{bmatrix}\dot{p} \\ \dot{v} \\ \dot{\theta} \\ \dot{\omega}\end{bmatrix} = \begin{bmatrix}v \\ \frac{[F+m_{p}\sin{(\theta)}(l\omega^{2}+g\cos{(\theta)})]}{m_{c}+m_{p}\sin{(\theta)}^{2}} \\ \omega \\ \frac{[-F\cos{(\theta)}-m_{p}l\omega^{2}\cos{(\theta)}\sin{(\theta)}-(m_{c}+m_{p})g\sin{(\theta)}]}{l(m_{c}+m_{p}\sin{(\theta)}^{2})}\end{bmatrix}
    \end{align*}
    where $m_{c}=2, m_{p}=1, l=1$, and $g=9.8065$.


    \begin{figure}
        \centering
        \subfloat[]{ \includegraphics[width=\linewidth]{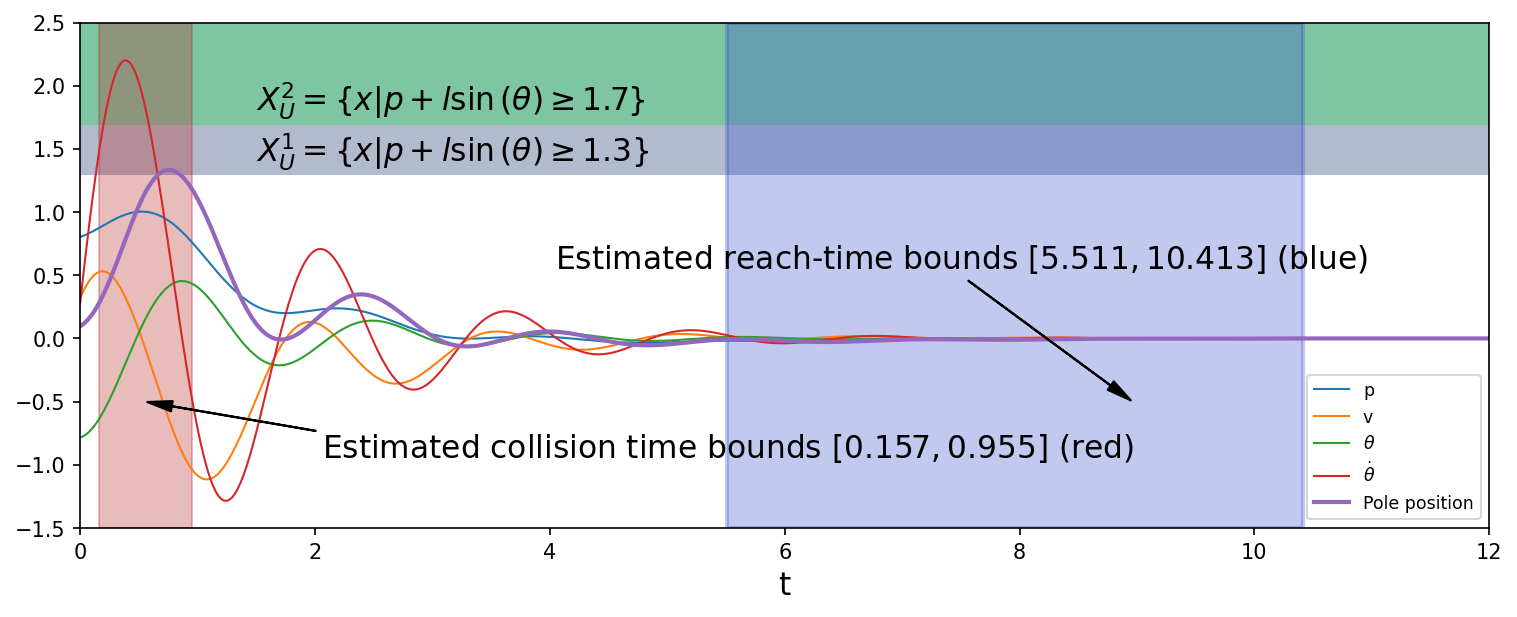} \label{fig:ex60} }
        \\ \subfloat[]{ \includegraphics[width=0.48\linewidth]{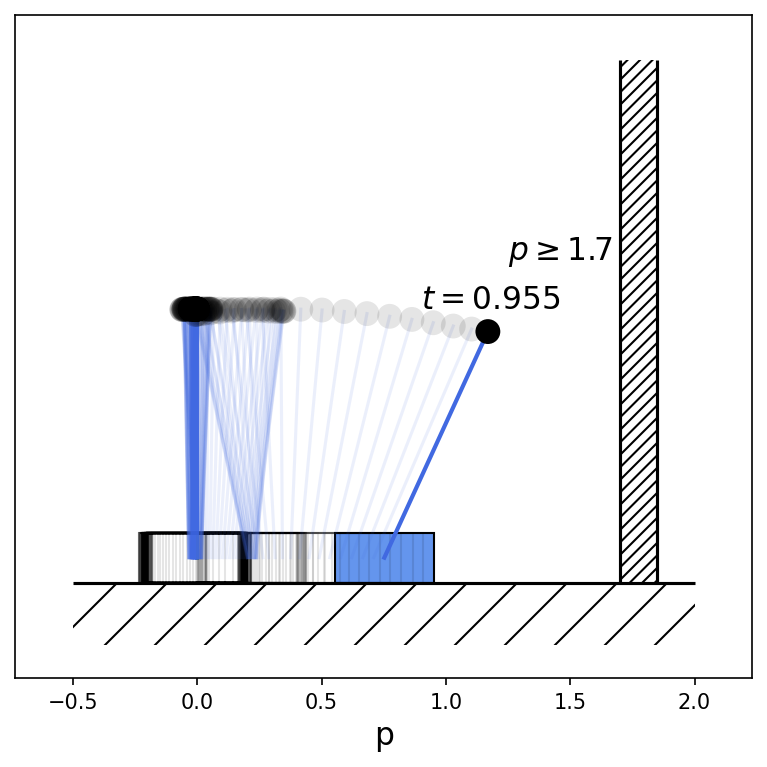} \label{fig:ex62} }
        \subfloat[]{ \hspace{-0.2cm}\includegraphics[width=0.48\linewidth]{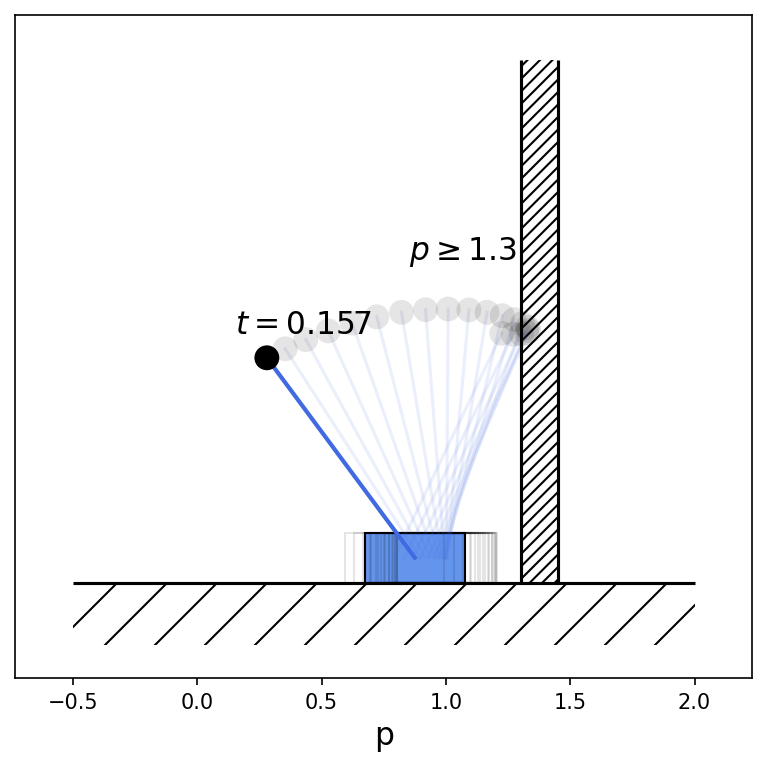} \label{fig:ex61} }
        \caption{Simulation of the Cart-pole system with LQR controller. (a) Convergence
        of the Cart-pole system under LQR control, with the estimated collision
        time bounds (red) and reach-time bounds (blue). (b) Motion snapshots within
        $T = [0.955,10.413]$ demonstrate that the system finally stabilizes if there
        is no collision occurred. (c) Motion snapshots within the estimated
        collision time interval $I_{U}^{1} = [0.157,0.955]$.}
        \label{fig:ex6}\vspace{-2em}
    \end{figure}


    We initialize the system with the state $X_{0}=(0.8, 0.3, -\pi/4, 0.2)$ and
    the target set as
    $X_{F}=\{ x \,|\, 10^{-4}\leq\|x\|_{\infty}\leq 10^{-2}, x \in X \}$, and apply
    our method to verify if a given LQR controller $\pi_{LQR}$ can stabilize the
    system without collision. The system has principal eigenvalues
    $\lambda_{1,2}= -0.633\pm0.473j$ and $\lambda_{3,4}=-0.752\pm3.997j$ and a
    stable equilibrium at the origin. We first set the unsafe set as $X_{U}^{1}=\{
    x \, | \,p+l\sin{(\theta)}> 1.3\}$, which represents a vertical wall. Our method
    outputs the reachable time bounds to the target set and the unsafe set as $I_{F}
    =[5.512,10.413]$ and $I_{U}^{1}=[0.157,0.955]$ as visualized in the Fig.
    \ref{fig:ex60}.

    When we change the unsafe set to
    $X_{U}^{2}=\{x \, | \,p+l\sin{(\theta)}> 1.7\}$, the corresponding estimated
    reachable time interval is $I_{U}^{2} = \emptyset$. Following corollary
    \ref{cor:main}, we can conclude that the unsafe set $X_{U}^{2}$ is unreachable
    from $X_{0}$ and the controller can safely drive the system to the equilibrium
    without collision with the wall. Fig. \ref{fig:ex61} depicts the simulation snapshots
    of the pole during the estimated unsafe time interval $I_{U}^{1}$ driven by the
    LQR controller. Indeed, the pole is shown to collide with the wall in this interval.

\end{example}




\begin{example}
    \label{ex6}(Multi-agent consensus system)
    Consider a multi-agent system consisting of $8$ quadrotors operating cooperatively in a shared workspace contains known signal interference regions. 
    The task requires concurrent arrival at a designated recovery area and prevent the simultaneous entry of all quadrotors into the signal interference zones to maintain a persistent connection with ground station.
    The agents' coordination is governed by a communication
    graph $\mathcal{G}$, as shown in Fig. \ref{fig:ex70}, with vertex set $\mathcal{V}=\{0,\cdots, n-1\}$ and edge set $\mathcal{E} =\{1,\cdots,m\}$. The neighboring nodes of agent $i$ is denoted by $N_{i} \subset \mathcal{V}$.
    The system state $q$ is represented by the augmented
    state vector $q = [x_{1},y_{1},\ldots,x_{n},y_{n}]$.
    Each component of the state, such as $x$, follows the nonlinear distributed
    consensus protocol given by

    \begin{align*}
        \dot{x}_{i}= u_{i} = -\sigma_{i}(x_{i}) \sum_{j\in N_i}a_{ij}(x_{i}-x_{j}), \forall i \in \mathcal{V}
    \end{align*}
    where $\sigma_{i}(\cdot)$ is continuous and positive for all $i \in \mathcal{V}$,
    and $a_{ij}(\cdot)$ is Lipschitz continuous for all $(i,j) \in \mathcal{E}$
    with $a_{ij}(x) >0$ when $x\neq0$ and $a_{ij}=0$. 
    We consider
    $\sigma_{i}(z)=\frac{1}{1+\exp{(-z)}}$ and
    $a_{ij}(\cdot)=r_{ij}[1.2\tanh{(\cdot)}+\sin{(\cdot)}]$ with randomly
    sampled $r_{ij}\in (0,1)$. 
    By adding a linear state feedback controller $u(q) = K q$ to right hand side of the system,
    we ensure that the origin is an asymptotically stable
    isolated equilibrium of this $16$-dimensional system which in turn ensures convergence of numerical methods used for computation of Koopman eigenfunctions. As shown in \cite{andreasson2014distributed}, each agent's state will converge to some equilibrium point $(x_{i}^{*},y_{i}^{*})$ if
    $\mathcal{G}$ is connected and undirected and each $a_{ij}(\cdot)$ is an odd function. We are interested in this example to verify the convergence of each agent's state to its equilibrium, which is $(x_{i}^{*},y_{i}^{*})=(0,0)$ in our settings, from arbitrary initial states.

    \begin{table}[!hb]
        \centering
        \aboverulesep = 0pt \belowrulesep = 0pt
        \begin{adjustbox}
            {width=0.48\textwidth,center,keepaspectratio}
            \begin{tabular}{|M{0.04\textwidth}|M{0.12\textwidth} M{0.12\textwidth}|}
                \toprule \multirow{2}{*}{Agent}   & \multicolumn{2}{c|}{$X_{0}^{i}$} \\
                \cmidrule{2-3}                    & $x_{i}(0)\in$                   & $y_{i}(0) \in$     \\
                \midrule \rowcolor[gray]{.90} $1$ & $[2.276, 2.476]$                & $[-1.229, -1.029]$ \\
                $2$                               & $[-1.754, -1.554]$              & $[1.784, 1.984]$   \\
                \rowcolor[gray]{.90} $3$          & $[1.946, 2.146]$                & $[-1.612, -1.412]$ \\
                $4$                               & $[-0.392, -0.192]$              & $[0.996, 1.196]$   \\
                \rowcolor[gray]{.90} $5$          & $[1.627, 1.827]$                & $[-1.759, -1.559]$ \\
                $6$                               & $[0.725, 0.925]$                & $[1.439, 1.639]$   \\
                \rowcolor[gray]{.90} $7$          & $[0.149, 0.349]$                & $[-1.776, -1.576]$ \\
                $8$                               & $[-2.422, -2.222]$              & $[-1.192, -0.992]$ \\
                \bottomrule
            \end{tabular}
            \begin{tabular}{|M{0.11\textwidth}|}
                \toprule Principal Eigenvalues \\
                \midrule $-0.442, -0.839$      \\
                $-0.819,-0.509$                \\
                $-0.719,-0.669$                \\
                $-0.645,-0.565$                \\
                $-0.442,-0.872$                \\
                $-0.779,-0.585$                \\
                $-0.705,-0.672$                \\
                $-0.627 \pm 0.004j$            \\
                \bottomrule
            \end{tabular}
        \end{adjustbox}
        \caption{Initial sets and principal eigenfunctions of the multi-agent
        consensus system.}
        \label{tab: multi-agent initial sets}
    \end{table}

\begin{figure}[H]
    \centering
    \begin{subfigure}[t]{0.45\linewidth}
    \centering
    \includegraphics[width=\linewidth]{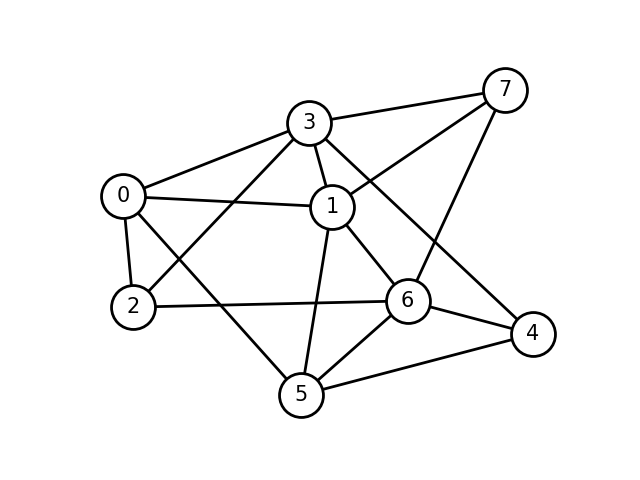}
    \caption{A consensus system with $8$ agents each with 2-dimensional states.}
    \label{fig:ex70}
    \end{subfigure}
    \hfill
    \begin{subfigure}[t]{0.53\linewidth}
    \centering
    \includegraphics[width=\linewidth]{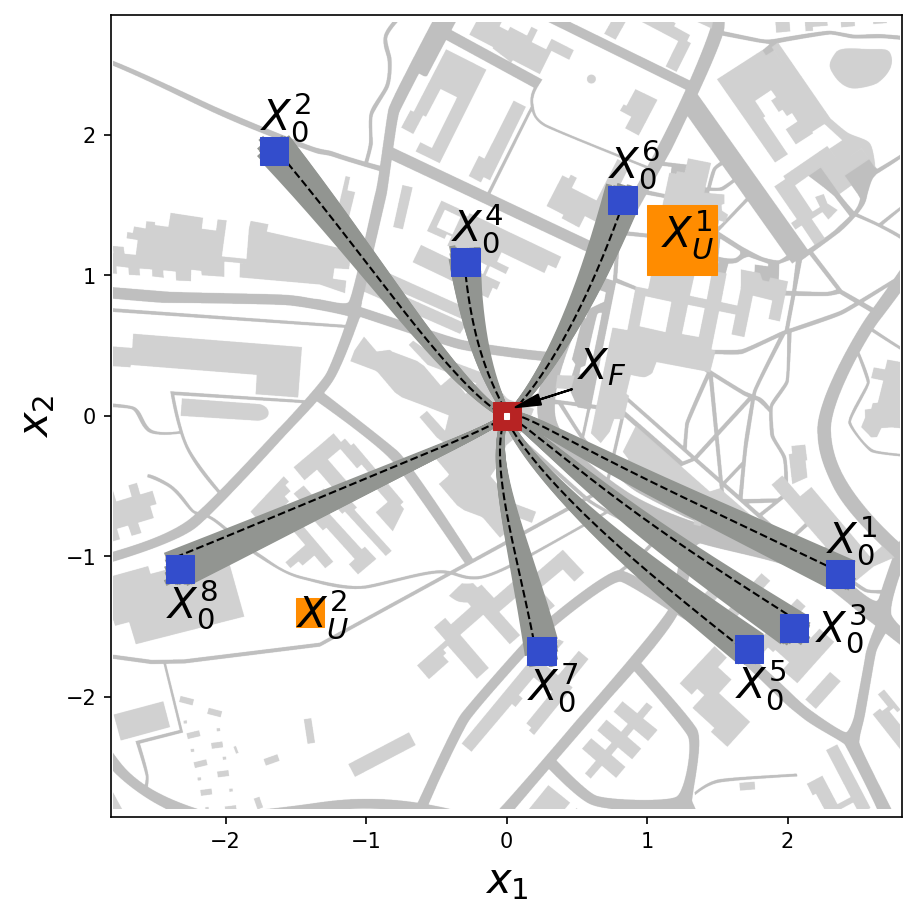}
    \caption{Reachable sets (in gray) of agents from corresponding initial
        sets for duration $T=[0,6.65]$ with simulated converging safe
        trajectories (black dashed lines).}
    \label{fig:ex71}
    \end{subfigure}
    \caption{Reach-time bounds estimation for the Multi-agent consensus system.}
\end{figure}

    As illustrated in Table \ref{tab: multi-agent initial sets}, we consider the
    cartesian product of $8$ random chosen sets as the initial set
    $X_{0} = \bigtimes_{i=1}^{8} X_{0}^{i}$. Furthermore, we define the target
    set as $X_{F}=\{q \,|\, 0.02 \leq \| q\|_{\infty}\leq 0.1, q \in X\}$ to
    verify each agent's state convergence around the equilibrium point $(0,0)$, as
    well as unsafe sets $X_{U}^{1} = \{q \,|\, \| q - 1.25 \|_{\infty}\leq 0.25, q
    \in X\}$ and $X_{U}^{2} = \{q \,| \,\| q + 1.4 \|_{\infty}\leq 0.1, q \in X
    \}$ 
    to denote signal interference regions.
    Under this configuration, our method estimates the reach time bound for
    $X_{F}$ as $I = [5.855, 6.650]$, and the reach-time bounds estimation for
    two unsafe sets are both $\emptyset$, implying that all agent avoid entering
    the unsafe regions while converging to
    the neighborhood $X_{F}$ around the equilibrium point. The simulated
    reachable sets of the system shown in Fig. \ref{fig:ex71} confirm the validity
    of our algorithm's results.
\end{example}

\begin{table*}
    [!htbp]
    \centering
    \aboverulesep = 0pt \belowrulesep = 0pt

    \begin{threeparttable}

    \begin{adjustbox}
        {width=\textwidth,center} \def\arraystretch{1.4} 
        \begin{tabular}{ |M{0.08\textwidth} |M{0.03\textwidth} | M{0.2\textwidth} | M{0.08\textwidth}  M{0.08\textwidth} | M{0.08\textwidth} M{0.11\textwidth} |M{0.11\textwidth}| M{0.10\textwidth} | M{0.12\textwidth}|}
            \toprule 
            \multicolumn{3}{|c|}{\textbf{Benchmark Problem}} & \multicolumn{7}{c|}{\textbf{Verification Result and Computation Time (s)}} \\
            \cmidrule{1-6} \cmidrule{7-10} 
            \midrule 
            \multirow{2}{*}{\textbf{System}} & \multirow{2}{*}{\textbf{Dim.}} & \multirow{2}{*}{\textbf{Benchmark ID}\tnote{e}} &  \multicolumn{2}{c|}{\textbf{Ours}} & \multicolumn{2}{c|}{\textbf{Set-Propagation}} & \textbf{Counterexample Searching} & \textbf{Hamilton-Jacobi PDE} & \textbf{Barrier Certificate \cite{prajna2004safety}}          \\
            \cmidrule{4-10} 
            & & & \textbf{Principal eigenpairs} & \textbf{Reach-time bounds} & \textbf{CORA} \cite{althoff2016cora} & \textbf{JuliaReach} \cite{bogomolov2019juliareach} & \textbf{dReach} \cite{kong2015dreach} & \textbf{hj\_reachability} & \textbf{SOSTOOLS} \cite{papachristodoulou2013sostools} \\
            \midrule                                                             
            \multirow{6}{*}{\textbf{\shortstack{Duffing's \\ oscillator \\(Example 1)}}} & \multirow{6}{*}{2} 
            & \gc DUFF\_REA\_BOX & \gc 13.42 & \gc Rea (0.07) & \gc Ukn (11.71) & \gc Ukn (13.70) & \gc $\delta-$Rea (93.12) & \gc Rea (57.54) & \gc Ukn (1.06) \\
            & & DUFF\_REA\_CVX\_LS & 15.07 & Rea (0.06) & N/A & N/A & $\delta-$Rea (3.13) & Rea (102.41) & Ukn (0.89) \\
            & & \gc DUFF\_REA\_NCVX\_LS & \gc 13.49 & \gc Rea (0.07) & \gc N/A & \gc N/A & \gc $\delta-$Rea (291.29) & \gc N/S \tnote{a} & \gc N/A \tnote{b}  \\
            \cmidrule{3-10} 
            & &  DUFF\_REA\_DSJT\_BOX & 17.83 & Rea (0.08) & N/A & N/A & Unr (0.12) & Rea (89.86) & Ukn (1.09)  \\
            & & \gc DUFF\_UNR\_BOX & \gc 13.95 & \gc Inc (0.07) & \gc Ukn (16.29) & \gc Ukn (14.32) & \gc T/O & \gc Unr (190.15) & \gc Unr (1.52) \\
            & & DUFF\_UNR\_CVX\_LS & 15.10 & Inc (0.08) & N/A & N/A & T/O & Unr (103.87) & Unr (0.88) \\
            \midrule
            \multirow{2}{*}{\textbf{\shortstack{Stable \\limit cycle}}} 
            & \multirow{2}{*}{2} 
            & \gc NL-SLC\_REA\_BOX & \gc \scriptsize Analytical  & \gc PRea (0.04) & \gc S/E & \gc Ukn (35.25) & \gc T/O & \gc Rea (1348.18) & \gc N/A \tnote{b} \\
            & & NL-SLC\_UNR\_BOX & \scriptsize Analytical & Unr (0.04) & S/E & Unr (37.72) & T/O & Unr (1375.69) & N/A \tnote{b} \\
            \midrule
            \multirow{2}{*}{\textbf{\shortstack{Saddle point\\system}}} 
            & \multirow{2}{*}{2} & \gc NL-EIG\_BWD\_REA\_BOX & \gc \scriptsize Analytical & \gc Rea (0.03) & \gc N/A & \gc N/A & \gc T/O & \gc Rea (33.76) & \gc N/A \tnote{b} \\
            & & NL-EIG\_BWD\_UNR\_BOX & \scriptsize Analytical & Unr (0.03) & N/A & N/A & T/O & Unr (33.54) & N/A \tnote{b} \\
            \midrule
            \multirow{3}{*}{\textbf{\shortstack{Cartpole \\ (Example 2)}}} 
            & \multirow{3}{*}{4} & \gc CP-LQR\_REA\_CVG & \gc 64.86 & \gc Rea (0.08) & \gc $\times$ (4.67) & \gc $\times$ (35.60) & \gc T/O & \gc OOM & \gc N/A \tnote{b}\\
            & & CP-LQR\_REA\_USF & 54.76 & Rea (0.11) & N/A & N/A & Rea (0.43) & OOM & N/A \tnote{b} \\
            & & \gc CP-LQR\_UNR\_SAF & \gc 21.16 & \gc Unr (0.04) & \gc N/A & \gc N/A & \gc T/O & \gc OOM & \gc N/A \tnote{b} \\
            \midrule
            \multirow{3}{*}{\textbf{\shortstack{Multi-agent\\consensus\\(Example 3)}}} 
            & \multirow{3}{*}{16} & MAS-CON\_REA\_BOX & 801.86 & Rea (0.06) & Rea (77.26) & Rea (1110.07) & T/O & OOM & N/A \tnote{b} \\
            & & \gc MAS-CON\_UNR\_SAF\_0 & \gc 782.91 & \gc Unr (0.05) & \gc Unr (76.03) & \gc Unr (1110.22) & \gc T/O & \gc OOM  & \gc N/A \tnote{b} \\
            & & MAS-CON\_UNR\_SAF\_1 &787.40 & Unr (0.06) & Unr (76.38) & Unr (1099.77) & T/O & OOM  & N/A \tnote{b} \\
            \bottomrule
        \end{tabular}
    \end{adjustbox}
    \begin{tablenotes}
            \item[a] The toolkit does not support general level set definition.
            \item[b] Not applicable due to trigonometric functions or fractional polynomials.
            \item[c] Rea --- Reachable; PRea --- Periodically reachable; Unr --- Unreachable; Ukn --- Unknown; Inc --- Inclusive. 
            \item[d]  N/A -- Not applicable; N/S --- Not support; S/E --- Set explosion; OOM --- Out of memory; T/O --- Time out; $\times$ --- Unsound result.
            \item[e] For more details regarding the benchmark specifications, please follow the shared link for the code. \protect\footnotemark
    \end{tablenotes}

    \caption{
    Comparative evaluation of reachability analysis methodologies on benchmarks. }
    \label{tab:benchmark}
    \end{threeparttable}
\end{table*}
\vspace{-2em}
\subsection{Comparative study}
As summarized in Table \ref{tab:benchmark}, our method distinguishes itself from existing reachability analysis techniques, each with distinct trade-offs. 
Set-propagation methods, for instance, scale well to high-dimensional systems and can reliably prove non-reachability. However, they are hampered by the wrapping effect, which leads to excessive conservatism over long horizons, and they generally cannot certify reachability. While tools like JuliaReach mitigate this conservatism to some extent using so-called ``lazy set representations", the fundamental limitation remains. 
In contrast, falsification-based tools like dReach and methods relying on HJB equations face their own challenges: the former struggles with high-dimensional systems and cannot prove non-reachability, while the latter is severely constrained by the curse of dimensionality. 
A different approach, barrier certificate synthesis via Sum-of-Squares (SOS) optimization, uniquely addresses infinite-horizon non-reachability but is often restricted to polynomial systems and cannot verify reachability. 
In contrast to these spatial or algebraic approaches, our method introduces a new paradigm by transforming the reachability problem into the temporal domain via the Koopman spectrum. This enables the reliable, infinite-horizon verification of diverse reachability properties (both reachability and non-reachability) and, through a sampling mechanism, extends its applicability to complex, non-convex or disjoint sets, thereby overcoming the primary limitations of the aforementioned techniques.

\footnotetext{\footnotesize https://github.com/Aalto-Nonlinear-Systems-and-Control/Time-to-Reach}
\section{Conclusion and Discussion}
This paper presents a methodology for verifying reachability based on the Koopman spectrum. The proposed method investigates the existence of feasible reachable time intervals to verify the reachability of systems, thereby avoiding the unnecessary computational overhead and conservatism arising from approximating the actual reachable sets.
Numerical experiments showcase our method's capability in verifying reachability between non-convex and even disconnected sets, over unbounded time horizons, and in the context of backward reachability. In addition, high-dimensional case studies illustrate the scalability of our method.
Since our method is based on the over-approximation of the exact reachable time range, the non-emptiness of the intersection of time-to-reach bounds does not serve as a sufficient basis for inferring reachability, although in such cases, the estimates can be combined with simulations to verify reachability. Our future research will focus on complementing the theory and seek its application when considering input control signals.\\







\section*{References}
\bibliographystyle{ieeetr}
\bibliography{refs}

\begin{thebibliography}{10}

\bibitem{asarin2003reachability}
E.~Asarin, T.~Dang, and A.~Girard, ``Reachability analysis of nonlinear systems using conservative approximation,'' in {\em International Workshop on Hybrid Systems: Computation and Control}, pp.~20--35, Springer, 2003.

\bibitem{asarin2006recent}
E.~Asarin, T.~Dang, G.~Frehse, A.~Girard, C.~Le~Guernic, and O.~Maler, ``Recent progress in continuous and hybrid reachability analysis,'' in {\em 2006 IEEE International Conference on Control Applications}, pp.~1582--1587, IEEE, 2006.

\bibitem{bansal2017hamilton}
S.~Bansal, M.~Chen, S.~Herbert, and C.~J. Tomlin, ``Hamilton-jacobi reachability: A brief overview and recent advances,'' in {\em 2017 IEEE 56th Annual Conference on Decision and Control (CDC)}, pp.~2242--2253, IEEE, 2017.

\bibitem{lygeros2004reachability}
J.~Lygeros, ``On reachability and minimum cost optimal control,'' {\em Automatica}, vol.~40, no.~6, pp.~917--927, 2004.

\bibitem{mitchell2007toolbox}
I.~M. Mitchell {\em et~al.}, ``A toolbox of level set methods,'' {\em UBC Department of Computer Science Technical Report TR-2007-11}, vol.~1, p.~6, 2007.

\bibitem{ghorbal2017hierarchy}
K.~Ghorbal, A.~Sogokon, and A.~Platzer, ``A hierarchy of proof rules for checking positive invariance of algebraic and semi-algebraic sets,'' {\em Computer Languages, Systems \& Structures}, vol.~47, pp.~19--43, 2017.

\bibitem{tiwari2004nonlinear}
A.~Tiwari and G.~Khanna, ``Nonlinear systems: Approximating reach sets,'' in {\em International Workshop on Hybrid Systems: Computation and Control}, pp.~600--614, Springer, 2004.

\bibitem{prajna2007framework}
S.~Prajna, A.~Jadbabaie, and G.~J. Pappas, ``A framework for worst-case and stochastic safety verification using barrier certificates,'' {\em IEEE Transactions on Automatic Control}, vol.~52, no.~8, pp.~1415--1428, 2007.

\bibitem{ghaffari2018safety}
A.~Ghaffari, I.~Abel, D.~Ricketts, S.~Lerner, and M.~Krsti{\'c}, ``Safety verification using barrier certificates with application to double integrator with input saturation and zero-order hold,'' in {\em 2018 Annual American Control Conference (ACC)}, pp.~4664--4669, IEEE, 2018.

\bibitem{althoff2021set}
M.~Althoff, G.~Frehse, and A.~Girard, ``Set propagation techniques for reachability analysis,'' {\em Annual Review of Control, Robotics, and Autonomous Systems}, vol.~4, pp.~369--395, 2021.

\bibitem{althoff2016cora}
M.~Althoff and N.~Kochdumper, ``Cora 2016 manual,'' {\em TU Munich}, vol.~85748, 2016.

\bibitem{bogomolov2019juliareach}
S.~Bogomolov, M.~Forets, G.~Frehse, K.~Potomkin, and C.~Schilling, ``Juliareach: a toolbox for set-based reachability,'' in {\em Proceedings of the 22nd ACM International Conference on Hybrid Systems: Computation and Control}, pp.~39--44, 2019.

\bibitem{kong2015dreach}
S.~Kong, S.~Gao, W.~Chen, and E.~Clarke, ``dreach: $\delta$-reachability analysis for hybrid systems,'' in {\em Tools and Algorithms for the Construction and Analysis of Systems: 21st International Conference, TACAS 2015, Held as Part of the European Joint Conferences on Theory and Practice of Software, ETAPS 2015, London, UK, April 11-18, 2015, Proceedings 21}, pp.~200--205, Springer, 2015.

\bibitem{abate2021fossil}
A.~Abate, D.~Ahmed, A.~Edwards, M.~Giacobbe, and A.~Peruffo, ``Fossil: a software tool for the formal synthesis of lyapunov functions and barrier certificates using neural networks,'' in {\em Proceedings of the 24th international conference on hybrid systems: computation and control}, pp.~1--11, 2021.

\bibitem{wang2020non}
A.~Wang, A.~Jasour, and B.~C. Williams, ``Non-gaussian chance-constrained trajectory planning for autonomous vehicles under agent uncertainty,'' {\em IEEE Robotics and Automation Letters}, vol.~5, no.~4, pp.~6041--6048, 2020.

\bibitem{chen2022reachability}
X.~Chen and S.~Sankaranarayanan, ``Reachability analysis for cyber-physical systems: Are we there yet?,'' in {\em NASA Formal Methods Symposium}, pp.~109--130, Springer, 2022.

\bibitem{coogan2020mixed}
S.~Coogan, ``Mixed monotonicity for reachability and safety in dynamical systems,'' in {\em 2020 59th IEEE Conference on Decision and Control (CDC)}, pp.~5074--5085, IEEE, 2020.

\bibitem{maidens2018exploiting}
J.~Maidens and M.~Arcak, ``Exploiting symmetry for discrete-time reachability computations,'' {\em IEEE Control Systems Letters}, vol.~2, no.~2, pp.~213--217, 2018.

\bibitem{sibai2020multi}
H.~Sibai, N.~Mokhlesi, C.~Fan, and S.~Mitra, ``Multi-agent safety verification using symmetry transformations,'' in {\em International Conference on Tools and Algorithms for the Construction and Analysis of Systems}, pp.~173--190, Springer, 2020.

\bibitem{chen2016decomposed}
X.~Chen and S.~Sankaranarayanan, ``Decomposed reachability analysis for nonlinear systems,'' in {\em 2016 IEEE Real-Time Systems Symposium (RTSS)}, pp.~13--24, IEEE, 2016.

\bibitem{koopman1931hamiltonian}
B.~O. Koopman, ``Hamiltonian systems and transformation in hilbert space,'' {\em Proceedings of the National Academy of Sciences}, vol.~17, no.~5, pp.~315--318, 1931.

\bibitem{bak2021reachability}
S.~Bak, S.~Bogomolov, P.~S. Duggirala, A.~R. Gerlach, and K.~Potomkin, ``Reachability of black-box nonlinear systems after koopman operator linearization,'' {\em IFAC-PapersOnLine}, vol.~54, no.~5, pp.~253--258, 2021.

\bibitem{bak2022reachability}
S.~Bak, S.~Bogomolov, B.~Hencey, N.~Kochdumper, E.~Lew, and K.~Potomkin, ``Reachability of koopman linearized systems using random fourier feature observables and polynomial zonotope refinement,'' in {\em International Conference on Computer Aided Verification}, pp.~490--510, Springer, 2022.

\bibitem{goswami2021bilinearization}
D.~Goswami and D.~A. Paley, ``Bilinearization, reachability, and optimal control of control-affine nonlinear systems: A koopman spectral approach,'' {\em IEEE Transactions on Automatic Control}, vol.~67, no.~6, pp.~2715--2728, 2021.

\bibitem{umathe2022reachability}
B.~Umathe, D.~Tellez-Castro, and U.~Vaidya, ``Reachability analysis using spectrum of koopman operator,'' {\em IEEE Control Systems Letters}, vol.~7, pp.~595--600, 2022.

\bibitem{mohr2016koopman}
R.~Mohr and I.~Mezi{\'c}, ``Koopman principle eigenfunctions and linearization of diffeomorphisms,'' {\em arXiv preprint arXiv:1611.01209}, 2016.

\bibitem{kvalheim2021existence}
M.~D. Kvalheim and S.~Revzen, ``Existence and uniqueness of global {Koopman} eigenfunctions for stable fixed points and periodic orbits,'' {\em Physica D: Nonlinear Phenomena}, vol.~425, p.~132959, 2021.

\bibitem{bollt2021geometric}
E.~M. Bollt, ``Geometric considerations of a good dictionary for koopman analysis of dynamical systems: Cardinality,“primary eigenfunction,” and efficient representation,'' {\em Communications in Nonlinear Science and Numerical Simulation}, vol.~100, p.~105833, 2021.

\bibitem{cooper1962programming}
A.~C.~W. Cooper {\em et~al.}, ``Programming with linear fractional functionals,'' {\em Naval Research logistics quarterly}, vol.~9, no.~3, pp.~181--186, 1962.

\bibitem{deka2023path}
S.~A. Deka, S.~S. Narayanan, and U.~Vaidya, ``Path-integral formula for computing koopman eigenfunctions,'' in {\em 2023 62nd IEEE Conference on Decision and Control (CDC)}, pp.~6641--6646, IEEE, 2023.

\bibitem{tedrake2009underactuated}
R.~Tedrake, ``Underactuated robotics: Learning, planning, and control for efficient and agile machines course notes for mit 6.832,'' {\em Working draft edition}, vol.~3, no.~4, p.~2, 2009.

\bibitem{andreasson2014distributed}
M.~Andreasson, D.~V. Dimarogonas, H.~Sandberg, and K.~H. Johansson, ``Distributed control of networked dynamical systems: Static feedback, integral action and consensus,'' {\em IEEE Transactions on Automatic Control}, vol.~59, no.~7, pp.~1750--1764, 2014.

\bibitem{prajna2004safety}
S.~Prajna and A.~Jadbabaie, ``Safety verification of hybrid systems using barrier certificates,'' in {\em International workshop on hybrid systems: Computation and control}, pp.~477--492, Springer, 2004.

\bibitem{papachristodoulou2013sostools}
A.~Papachristodoulou, J.~Anderson, G.~Valmorbida, S.~Prajna, P.~Seiler, P.~Parrilo, M.~M. Peet, and D.~Jagt, ``Sostools version 4.00 sum of squares optimization toolbox for matlab,'' {\em arXiv preprint arXiv:1310.4716}, 2013.

\end{thebibliography}
\begin{IEEEbiographynophoto}{Jianqiang Ding} is a doctoral researcher at the Nonlinear Systems and Control Group of Aalto University, Espoo, Finland.
He received his M.Eng. degree in computer technology from Shenzhen University, China in 2020. 
His research interests are centered around formal methods, control theory, and their applications in robotics.
\end{IEEEbiographynophoto}

\begin{IEEEbiographynophoto} 
{Shankar A. Deka} (Member, IEEE) is an Assistant Professor of Automatic Control in the Department of Electrical Engineering and Automation at Aalto University, Finland. He received his PhD in Mechanical Science and Engineering from the University of Illinois at Urbana-Champaign, USA, in 2019.

He previously held postdoc positions at KTH
Royal Institute of Technology, Stockholm and
University of California, Berkeley between 2019 and 2023. His research focus is on nonlinear stability theory, robust and optimal control, and learning for dynamics and controls, with applications in medical robotics and precision agriculture.

Dr. Deka is on the editorial board Unmanned Systems, a board
member of IEEE CSS Finland Chapter, and affiliated to the Finnish
Center for Artificial Intelligence.
\end{IEEEbiographynophoto}


\end{document}